\documentclass[conference]{IEEEtran}
\IEEEoverridecommandlockouts
\usepackage{url}
\usepackage{hyperref}
\usepackage{amsthm}
\usepackage{subcaption}
\newtheorem{proposition}{Proposition}
\theoremstyle{remark}
\newtheorem{remark}{Remark}
\usepackage{algorithm}
\usepackage{algpseudocode}
\usepackage{cite}
\usepackage{amsmath,amssymb,amsfonts}
\usepackage{graphicx}
\usepackage{textcomp}
\usepackage{xcolor}
\usepackage{braket}

\def\BibTeX{{\rm B\kern-.05em{\sc i\kern-.025em b}\kern-.08em
    T\kern-.1667em\lower.7ex\hbox{E}\kern-.125emX}}

\renewcommand{\baselinestretch}{0.94}

\begin{document}

\title{
%A Variational Quantum Approach to Finite-Horizon Optimal Control via Hardware-Efficient Ansatz
A Variational Surrogate Approach to Finite-Horizon Quantum Control via Hardware-Efficient Ansatz
\\
\thanks{The authors acknowledge the support of the Danish e-Infrastructure Consortium (DeiC) and the National Quantum Algorithm Academy (NQAA) through the Postdoctoral Scholarship under the project ``Quantum-Driven Solutions for Multi-Agent Systems and Advanced Computation''. This work was also partially supported by UID/00147- Research Center for Systems and Technologies (SYSTEC) - and the Associate Laboratory Advanced Production and Intelligent Systems (ARISE, 10.54499/LA/P/0112/2020) funded by Fundação para a Ciência e a Tecnologia, I.P./ MCTES through the national funds.}
}

\author{\IEEEauthorblockN{Nahid Binandeh Dehaghani}
\IEEEauthorblockA{\textit{Department of Electronic Systems} \\
\textit{Aalborg University}\\
Aalborg, Denmark \\
nahidbd@es.aau.dk}
\and
\IEEEauthorblockN{Rafal Wisniewski}
\IEEEauthorblockA{\textit{Department of Electronic Systems} \\
\textit{Aalborg University}\\
Aalborg, Denmark  \\
raf@es.aau.dk}
\and
\IEEEauthorblockN{A. Pedro Aguiar}
\IEEEauthorblockA{\textit{Faculty of Engineering} \\
\textit{University of Porto}\\
Porto, Portugal \\
pedro.aguiar@fe.up.pt}
}

\maketitle

\begin{abstract}
We present a variational quantum framework for finite-horizon quantum control based on hardware-efficient ansätze. The objective is to steer a quantum system from a given initial state to a desired target state over a fixed time horizon by minimizing a terminal cost defined in terms of state fidelity. Instead of explicitly synthesizing time-dependent control fields or enforcing Hamiltonian reachability constraints, the proposed method reformulates the control objective as a variational optimization problem in which a hardware-efficient parameterized quantum circuit provides a surrogate parameterization of the terminal evolution. The circuit consists of alternating layers of single-qubit rotations and entangling gates, whose parameters are optimized using classical routines to minimize the terminal infidelity. This formulation avoids reliance on problem-specific or physics-inspired ansätze, providing a flexible and implementation-friendly approach compatible with near-term quantum devices. Numerical experiments on multi-qubit state-transfer benchmarks demonstrate high-fidelity state transfer while highlighting the trade-off between ansatz expressivity, optimization complexity, and scalability with respect to system size and circuit depth.
\end{abstract}

\begin{IEEEkeywords}
Quantum computing, optimal control, quantum circuit.
\end{IEEEkeywords}

\section{Introduction}
Optimal control theory provides a systematic framework for steering dynamical systems toward desired objectives under constraints, with applications spanning engineering, physics, and emerging quantum technologies \cite{bryson1975applied, kirk2004optimal}. In the context of quantum systems, optimal control plays a central role in tasks such as state preparation, quantum gate synthesis, and quantum information transfer \cite{glaser2015training, brif2010control, dehaghaniaccess}. Classical approaches to quantum optimal control, including gradient-based methods and techniques derived from Pontryagin’s Minimum Principle, typically rely on explicit parameterizations of time-dependent control fields and accurate system models \cite{dehaghani2023quantumICSC, dehaghani2023quantum}. However, these methods may become computationally demanding for high-dimensional systems and are not always directly compatible with near-term quantum hardware.

In parallel, variational quantum algorithms (VQAs) have emerged as a promising paradigm for leveraging parameterized quantum circuits in combination with classical optimization to solve complex problems \cite{cerezo2021variational, bharti2022noisy}. A key component of VQAs is the choice of ansatz, which defines the structure of the parameterized circuit. Among the various ansätze proposed in the literature, hardware-efficient ansätze have gained significant attention due to their simplicity, low circuit depth, and compatibility with current noisy intermediate-scale quantum (NISQ) devices \cite{kandala2017hardware, schuld2020circuit}. These circuits are typically composed of alternating layers of single-qubit rotations and entangling gates arranged according to hardware connectivity, making them attractive for practical implementations.

Most existing variational approaches to quantum control adopt physics-inspired ansätze that explicitly mimic the underlying system dynamics, for example through Trotterized evolutions of the system Hamiltonian \cite{magann2019feedback, larocca2022diagnosing, dehaghani2025trotterized}. While such approaches can be highly effective and interpretable, they often require detailed knowledge of the system and may not generalize easily across different platforms. 
In contrast, the use of hardware-efficient ansätze as surrogate parameterizations of finite-horizon quantum control remains relatively unexplored, particularly from a control-theoretic perspective.

In this paper, we present a variational quantum approach to finite-horizon optimal control based on hardware-efficient ansätze. Instead of directly parameterizing time-dependent control inputs, we reformulate the control problem as a variational optimization problem in which a parameterized quantum circuit acts as a variational surrogate for the terminal evolution. 
The objective is to map a given initial state to a desired target state over a prescribed finite horizon by minimizing a terminal cost defined in terms of state fidelity.
This formulation provides a flexible and implementation-friendly alternative to conventional control parameterizations by learning a surrogate terminal evolution through a hardware-efficient quantum circuit, while maintaining a clear connection to the underlying finite-horizon control objective.

The main contributions of this work are:
(i) a control-theoretic variational formulation of finite-horizon quantum control using hardware-efficient ansätze,
(ii) a hardware-efficient surrogate control formulation in which trainable circuit parameters replace explicit time-dependent control fields, distinguishing the proposed approach from Hamiltonian-inspired variational quantum control methods, and
(iii) a numerical investigation of convergence, expressivity, and robustness with respect to circuit depth and random initialization.

\section{Optimal Control Formulation}

In this section, we present the finite-horizon quantum optimal control problem considered in this work. We begin with the continuous-time description of a controlled closed quantum system and then introduce a discrete representation over a finite horizon. This formulation provides the control-theoretic basis for the variational circuit framework developed in the next section.

\paragraph*{Controlled Quantum Dynamics}

Consider an $N$-qubit closed quantum system whose state $|\psi(t)\rangle \in \mathbb{C}^{2^N}$ evolves according to the Schrödinger equation
\begin{equation}
i \frac{d}{dt} |\psi(t)\rangle = H(u(t)) |\psi(t)\rangle,
\label{eq:schrodinger}
\end{equation}
where $H(u(t))$ denotes the controlled Hamiltonian. We assume the standard bilinear form
\begin{equation}
H(u(t)) = H_0 + \sum_{k=1}^{m} u_k(t) H_k,
\label{eq:bilinear_ham}
\end{equation}
where $H_0$ is the drift Hamiltonian, $\{H_k\}_{k=1}^m$ are Hermitian control generators, and $u(t) = (u_1(t),\dots,u_m(t))$ is the control input.
The system is initialized at
$
|\psi(0)\rangle = |\psi_{\mathrm{in}}\rangle,
$
and the objective is to steer the system toward a prescribed target state $|\psi_{\mathrm{tar}}\rangle$ at a terminal time $T>0$.
The admissible controls are assumed to belong to the set
\begin{equation*}
\mathcal{U} =
\left\{
\begin{aligned}
u = (u_1,\dots,u_m) : [0,T] &\to \mathbb{R}^m \\
u_k \text{ measurable},\quad &u_k^{\min} \le u_k(t) \le u_k^{\max}
\end{aligned}
\right\}.
\label{eq:admissible_controls}
\end{equation*}

\paragraph*{Finite-Horizon Optimal Control Problem}
The control objective is to maximize the overlap between the terminal state and the prescribed target state, which is equivalent to minimizing the terminal infidelity
$J(u) = 1 - F(u),$
where
$F(u) = \left| \langle \psi_{\mathrm{tar}} \mid \psi(T;u) \rangle \right|^2$
denotes the terminal fidelity.
The resulting finite-horizon optimal control problem is
\begin{equation}
\begin{aligned}
\min_{u \in \mathcal{U}} \quad & J(u) = 1 - \left| \langle \psi_{\mathrm{tar}} \mid \psi(T;u) \rangle \right|^2 \\
\text{s.t.} \quad
& i \frac{d}{dt} |\psi(t)\rangle = H(u(t)) |\psi(t)\rangle, \\
& |\psi(0)\rangle = |\psi_{\mathrm{in}}\rangle .
\end{aligned}
\label{eq:ocp}
\end{equation}
This formulation describes a standard finite-horizon quantum state-transfer problem. Although additional terms may be included to penalize control effort or implementation cost, in this work we focus on the terminal fidelity objective in order to isolate the role of the variational ansatz.

\paragraph*{Time Discretization and Unitary Representation}
For numerical implementation, the interval $[0,T]$ is partitioned into $N_t$ subintervals of equal duration
$\Delta t = \frac{T}{N_t}.$
The control is approximated as piecewise constant over this grid, so that the Hamiltonian is fixed on each interval. The resulting discrete-time evolution is
\begin{equation}
|\psi_{k+1}\rangle = e^{-i H(u_k)\Delta t} |\psi_k\rangle,
\qquad k = 0,\dots,N_t-1,
\label{eq:discrete_dynamics}
\end{equation}
with $|\psi_0\rangle = |\psi_{\mathrm{in}}\rangle$.
Defining the slice-wise propagator
$
U_k := e^{-i H(u_k)\Delta t},
$
the terminal state is given by
\begin{equation}
|\psi(T)\rangle = U_{N_t-1} U_{N_t-2} \cdots U_0 |\psi_{\mathrm{in}}\rangle.
\label{eq:discrete_propagator}
\end{equation} 

This discrete unitary representation makes it possible to reinterpret the control task as an optimization over a finite-dimensional parameter space. In conventional discretized quantum control, these parameters correspond to the sampled control amplitudes. In the present work, however, we do not optimize the sampled controls directly. Instead, we replace the discretized terminal evolution operator with a trainable hardware-efficient circuit, whose parameters provide a variational surrogate for the terminal evolution while preserving the underlying finite-horizon state-transfer objective.

\section{Variational Quantum Formulation}
Starting from the discrete representation introduced in Section II, the finite-horizon evolution may be viewed as a product of unitary propagators associated with the control inputs. Rather than explicitly constructing these propagators from a specified drift Hamiltonian and control Hamiltonians, we represent the overall terminal evolution by a parameterized unitary map \(U(\theta)\in U(d)\), where \(d=2^N\) and \(U(d)\) denotes the group of \(d\times d\) unitary matrices. The corresponding terminal state at the prescribed horizon is given by \(|\psi(\theta)\rangle = U(\theta)|\psi_{\mathrm{in}}\rangle\). In this way, \(\theta\) provides a finite-dimensional surrogate representation of the control strategy over the prescribed horizon.

The control task is then reformulated as the variational optimization problem
\begin{equation}
\min_{\theta \in \mathbb{R}^p} \; J(\theta)
= 1 - \left| \langle \psi_{\mathrm{tar}} \mid \psi(\theta) \rangle \right|^2,
\end{equation}
so that the original search over admissible control functions is replaced by an optimization over circuit parameters.
To realize the unitary $U(\theta)$, we adopt a hardware-efficient ansatz built from alternating layers of local rotations and nearest-neighbor entangling gates. Specifically, we write
\begin{equation}
U(\theta) = \prod_{\ell=1}^{L} (U_{\mathrm{ent}} \, U_{\mathrm{rot}}(\theta_\ell)),
\end{equation}
where
$
U_{\mathrm{rot}}(\theta_\ell)
= \prod_{j=1}^{N}
R_z^{(j)}\!\left(\theta_{\ell,j}^{(1)}\right)
R_x^{(j)}\!\left(\theta_{\ell,j}^{(2)}\right),
$
$
U_{\mathrm{ent}} = \prod_{j=1}^{N-1} \mathrm{CNOT}_{j,j+1},
$
where $\mathrm{CNOT}_{j,j+1}$ denotes a controlled-NOT gate with control qubit $j$ and target qubit $j+1$.
For this circuit structure, the number of trainable parameters is $p = 2NL$.

Unlike Hamiltonian-inspired constructions, this ansatz does not enforce an explicit correspondence between circuit layers and the underlying physical generators. Instead, it provides a flexible and hardware-compatible parameterization of the terminal evolution. From a control perspective, the trainable variables provide a finite-dimensional surrogate parameterization of the control strategy, while the circuit depth $L$ governs the expressive capacity of the variational representation.

The variational problem is solved through a hybrid quantum-classical optimization loop. Starting from an initial parameter vector $\theta^{(0)}$, the hardware-efficient circuit prepares the state $|\psi(\theta^{(r)})\rangle$ at iteration $r$, from which the terminal fidelity and the objective function are evaluated. A classical optimizer then updates the parameter vector, and the process is repeated until a stopping criterion is satisfied. The resulting workflow is summarized in Algorithm~\ref{alg:hea_control}.

\begin{remark}[Control Interpretation]
The proposed variational formulation does not explicitly reconstruct time-dependent control inputs $u(t)$ or require specifying a drift Hamiltonian $H_0$ and control generators $H_k$. Instead, the hardware-efficient circuit provides a surrogate parameterization of the terminal evolution over a fixed horizon. The optimized parameters $\theta^\star$ therefore define an implicit control strategy that preserves the finite-horizon state-transfer objective without explicitly synthesizing physical control fields. The synthesis of implementable control pulses is beyond the scope of the present work and is left for future research.
\end{remark}

\begin{algorithm}[t]
\caption{\small Hybrid Variational Optimization with Hardware-Efficient Ansatz}
\label{alg:hea_control}
\begin{algorithmic}[1] \small
\State \textbf{Input:} initial state $|\psi_{\mathrm{in}}\rangle$, target state $|\psi_{\mathrm{tar}}\rangle$, circuit depth $L$, maximum iterations $R_{\max}$, tolerance $\varepsilon$
\State Initialize the parameter vector $\theta^{(0)} \in \mathbb{R}^{2NL}$
\For{$r = 0,1,\dots,R_{\max}-1$}
    \State Construct the parameterized quantum circuit implementing $U(\theta^{(r)})$
    \State Prepare the terminal state $|\psi(\theta^{(r)})\rangle = U(\theta^{(r)})|\psi_{\mathrm{in}}\rangle$
    \State Evaluate the terminal fidelity
    $
    F(\theta^{(r)}) = \left|\langle \psi_{\mathrm{tar}} \mid \psi(\theta^{(r)}) \rangle\right|^2
    $
    \State Evaluate the objective function
    $
    J(\theta^{(r)}) = 1-F(\theta^{(r)})
    $
    \If{$J(\theta^{(r)}) < \varepsilon$}
        \State \textbf{break}
    \EndIf
    \State Update $\theta^{(r+1)}$ using the SLSQP optimizer
\EndFor
\State \textbf{Output:} optimized parameters $\theta^\star$ and terminal fidelity $F^\star$
\end{algorithmic}
\end{algorithm}

\begin{proposition}[Basic properties of the variational formulation]
Let the parameter set $\Theta \subset \mathbb{R}^{p}$ be nonempty and compact. Assume that the parameterized circuit unitary $U(\theta)$ depends continuously on $\theta \in \Theta$. Consider the variational objective
$
J(\theta)
=
1-
\left|
\langle \psi_{\mathrm{tar}} | U(\theta) | \psi_{\mathrm{in}} \rangle
\right|^2 .
$
Then:
\begin{itemize}
    \item $0 \leq J(\theta) \leq 1$ for all $\theta \in \Theta$.
    \item The objective $J$ is continuous on $\Theta$.
    \item The optimization problem
    $
    \min_{\theta \in \Theta} J(\theta)
    $
    admits at least one global minimizer.
    \item For every $\theta\in\Theta$, $U(\theta)$ defines a valid unitary evolution.
\end{itemize}
\end{proposition}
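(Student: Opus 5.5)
The plan is to prove the four items in the order: unitarity, bounds, continuity, existence. Each item feeds the next.

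\emph{Unitarity.} For the hardware-efficient ansatz, $U(\theta)$ is a finite product of the gates $R_z^{(j)}(\cdot)$, $R_x^{(j)}(\cdot)$ and $\mathrm{CNOT}_{j,j+1}$.
\begin{itemize}
\item Each rotation $R_\alpha(\phi)=e^{-i\phi\sigma_\alpha/2}$ is the exponential of $-i$ times a Hermitian matrix, hence unitary.
\item Tensoring with identities on the other qubits preserves unitarity.
\item $\mathrm{CNOT}$ is a permutation matrix, hence unitary.
\end{itemize}
Since $U(d)$ is closed under multiplication, $U(\theta)^\dagger U(\theta)=I$ for every $\theta$. So $U(\theta)$ is a valid unitary evolution and $\||\psi(\theta)\rangle\|=\||\psi_{\mathrm{in}}\rangle\|=1$.

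For the abstract setting of the proposition, where only $U(\theta)\in U(d)$ is posited, this item is simply the standing assumption from Section III. I would remark on that and give the gate-level argument above as the concrete verification for the ansatz actually used.

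\emph{Bounds.} Using normalization of $|\psi_{\mathrm{tar}}\rangle$ and of $U(\theta)|\psi_{\mathrm{in}}\rangle$, the Cauchy--Schwarz inequality gives
\[
0\le |\langle \psi_{\mathrm{tar}}|U(\theta)|\psi_{\mathrm{in}}\rangle|^2 \le \|\psi_{\mathrm{tar}}\|^2\,\|U(\theta)\psi_{\mathrm{in}}\|^2 = 1 .
\]
Hence $0\le J(\theta)\le 1$.

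\emph{Continuity.} I would write $J$ as a composition of continuous maps:
\begin{itemize}
\item $\theta\mapsto U(\theta)$, continuous by hypothesis, in any matrix norm since all norms on $\mathbb{C}^{d\times d}$ are equivalent;
\item $U\mapsto \langle\psi_{\mathrm{tar}}|U|\psi_{\mathrm{in}}\rangle$, which is linear on a finite-dimensional space and therefore continuous;
\item $z\mapsto 1-|z|^2$, which is continuous on $\mathbb{C}$.
\end{itemize}
Explicitly, $|\langle\psi_{\mathrm{tar}}|(U(\theta)-U(\theta'))|\psi_{\mathrm{in}}\rangle|\le\|U(\theta)-U(\theta')\|$. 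For the concrete ansatz, continuity of $\theta\mapsto U(\theta)$ also follows directly, because each gate entry is a trigonometric polynomial in $\theta$.

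\emph{Existence.} $J$ is a real-valued continuous function on the nonempty compact set $\Theta$. The Weierstrass extreme value theorem then gives some $\theta^\star\in\Theta$ with $J(\theta^\star)=\min_{\theta\in\Theta}J(\theta)$.

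The only point needing care is that existence uses compactness of $\Theta$, whereas Section III optimizes over all of $\mathbb{R}^p$. I would add a remark that for the ansatz the map is $2\pi$-periodic in each rotation angle ($4\pi$ at the level of the unitary, but $J$ is insensitive to the global sign). So restricting to the compact box $[0,4\pi]^p$ loses no values of $J$, and the unconstrained problem also attains its minimum. No step is a genuine obstacle.
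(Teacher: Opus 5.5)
Your proposal is correct and follows the paper's proof essentially step for step. Unitarity comes from the gate decomposition, the bounds from normalization (you make the Cauchy--Schwarz step explicit), continuity by composition of continuous maps, and existence from the extreme value theorem on the compact $\Theta$. Two of your additions are sound and improve on the paper's version. You prove unitarity before using it for the bounds. Your periodicity remark is also correct: $U(\theta+2\pi e_i)=-U(\theta)$ leaves $J$ unchanged, so the minimum over $\mathbb{R}^p$ is attained on a compact box.
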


\begin{proof}
Since $U(\theta)$ is unitary and $|\psi_{\mathrm{in}}\rangle$ and
$|\psi_{\mathrm{tar}}\rangle$ are normalized,
$0\le F(\theta)\le1$, which immediately yields
$0\le J(\theta)\le1$.
The continuity of $U(\theta)$ together with the continuity of
inner products and the squared modulus implies that
$J$ is continuous on $\Theta$.
Since $\Theta$ is compact, the extreme value theorem guarantees
the existence of a global minimizer.
Finally, $U(\theta)$ is a product of unitary single-qubit rotations
and CNOT gates and is therefore unitary for every $\theta$.
\end{proof}
\begin{proposition}[Surrogate reachability of terminal states]
Let $\ket{\psi_{\mathrm{in}}}$ be a fixed initial state and let
$
\mathcal{U}_\Theta := \{ U(\theta) : \theta \in \Theta \} \subset \mathrm{U}(d)
$
denote the family of unitary operators generated by the variational ansatz.
Then, for every $\theta \in \Theta$, the terminal state
$
\ket{\psi(\theta)} = U(\theta)\ket{\psi_{\mathrm{in}}}
$
is reachable through a finite-horizon unitary evolution, that is, there exists a
time-dependent Hamiltonian $H_\theta(t)$ defined on $[0,T]$ such that
\begin{equation}\label{P2}
U(\theta)
=
\mathcal{T}
\exp\!\left(
 -i \int_0^T H_\theta(t)\,dt
\right).
\end{equation}
\end{proposition}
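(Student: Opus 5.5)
The plan is to use the fact that every unitary on a finite-dimensional space has a Hermitian logarithm. That turns the circuit into a sequence of time-independent Hamiltonian evolutions, and concatenating these in time gives a piecewise-constant Hamiltonian on $[0,T]$. By the previous proposition, $U(\theta)$ is unitary for every $\theta\in\Theta$. Moreover, it is an explicit finite product of gates,
\[
U(\theta)=G_M G_{M-1}\cdots G_1,
\]
where each $G_j$ is a single-qubit rotation $R_z^{(j)}(\cdot)$, $R_x^{(j)}(\cdot)$ or a $\mathrm{CNOT}_{j,j+1}$. In the layered ansatz $M = L(2N+N-1)$.

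First, I will attach a Hermitian generator $K_j$ with $G_j=e^{-iK_j}$ to each gate. For the rotations this is immediate: $R_z(\phi)=e^{-i\phi Z/2}$ and $R_x(\phi)=e^{-i\phi X/2}$, each tensored with identities on the other qubits. For CNOT I will write
\[
\mathrm{CNOT}=\tfrac{1}{2}(I+Z)\otimes I+\tfrac{1}{2}(I-Z)\otimes X .
\]
Since $X=-ie^{i\pi X/2}$, this gives $\mathrm{CNOT}=e^{-iK}$ with $K=\tfrac{\pi}{4}(I-Z)\otimes(I-X)$, which is a Hermitian projector times $\pi$. The identity can be checked directly because $K$ is $\pi$ times the projector onto $|1\rangle\otimes|-\rangle$. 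Alternatively, and more robustly, I will invoke the general fact that any unitary $V$ is diagonalized by a unitary $W$, $V=W\,\mathrm{diag}(e^{-i\lambda_k})W^\dagger$. Then $K=W\,\mathrm{diag}(\lambda_k)W^\dagger$ is Hermitian and satisfies $e^{-iK}=V$.

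Second, I will partition $[0,T]$ into $M$ subintervals $[t_{j-1},t_j)$ of length $\tau=T/M$. On the $j$-th subinterval I set $H_\theta(t):=K_j/\tau$. This $H_\theta$ is piecewise constant, hence bounded and measurable, and the time-ordered exponential in \eqref{P2} is well defined. It equals the product of the slice propagators in chronological order, with later times to the left:
\[
\mathcal{T}\exp\!\Big(-i\int_0^T H_\theta\,dt\Big)=e^{-iK_M}\cdots e^{-iK_1}=G_M\cdots G_1=U(\theta).
\]
This matches exactly the discrete propagator structure \eqref{eq:discrete_propagator} of Section II. Applying it to $|\psi_{\mathrm{in}}\rangle$ gives reachability of $|\psi(\theta)\rangle$.

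As a shortcut, one may skip the gate decomposition and take a single Hermitian logarithm $K_\theta$ of $U(\theta)$ via the spectral theorem. Then $H_\theta(t)\equiv K_\theta/T$ is constant and works directly. I will mention this, but prefer the gatewise construction because it stays closer to the circuit.

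The only genuine subtlety is ordering and convention. The time-ordered exponential must place later factors on the left, and the slice-wise composition property of $\mathcal{T}\exp$ for piecewise-constant generators must be justified, which follows from uniqueness of solutions of the linear ODE $i\dot V=H_\theta V$ on each slice. I expect this bookkeeping, rather than any deep step, to be the main point to state carefully. I will also remark that no claim is made that $H_\theta$ lies in the span of the physical $H_0,H_k$ with admissible $u\in\mathcal{U}$; the reachability here is only in the surrogate sense.
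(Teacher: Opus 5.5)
Your proof is correct, and it is more explicit than the paper's. The paper's entire argument is an appeal to ``a standard result'' that every unitary is the time-ordered evolution of some Hermitian Hamiltonian over a finite interval. It gives no proof of that result and no description of $H_\theta$.

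Your ``shortcut'' is exactly a proof of that fact: the spectral theorem gives a Hermitian logarithm $K_\theta$ and the constant Hamiltonian $H_\theta\equiv K_\theta/T$. On that route you coincide with the paper while supplying the step it leaves to citation.

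Your preferred gatewise route is genuinely different. You factor $U(\theta)$ into its $M=L(3N-1)$ gates and give explicit generators: $\phi Z/2$ and $\phi X/2$ for the rotations, and $K=\pi\,|1\rangle\langle 1|\otimes|-\rangle\langle -|$ for CNOT, which you correctly verify via $e^{-i\pi P}=I-2P$. You then concatenate these into a piecewise-constant $H_\theta(t)$.

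What this buys is structure. Each slice Hamiltonian is a single-qubit term or a nearest-neighbour two-qubit term, so the realization is 2-local and respects the chain connectivity. It also reproduces, slice by slice, the propagator product \eqref{eq:discrete_propagator} of Section II. By contrast, a global logarithm of $U(\theta)$ is generically dense and nonlocal. The price is the ordering and composition bookkeeping for the time-ordered exponential, which you identify and handle correctly.

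Your closing caveat is also apt. Neither construction puts $H_\theta$ in the bilinear family $H_0+\sum_k u_kH_k$ with $u\in\mathcal{U}$. The result is therefore reachability only in the surrogate sense, which is all the proposition asserts.
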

\begin{proof}
For any $\theta\in\Theta$, the operator $U(\theta)$ is unitary by construction. It is a standard result that every unitary operator can be realized as the time-ordered evolution generated by a time-dependent Hermitian Hamiltonian over a finite time interval. Consequently, there exists a (not necessarily unique) Hermitian operator-valued function $H_\theta(t)$ such that \eqref{P2} holds. Applying this evolution to $|\psi_{\mathrm{in}}\rangle$ yields
$|\psi(\theta)\rangle=U(\theta)|\psi_{\mathrm{in}}\rangle$. Hence, every terminal state generated by the variational circuit admits a finite-horizon Hamiltonian realization.
\end{proof}

\begin{remark}[Convergence and Optimization]
The above results guarantee the existence of an optimal variational parameter vector on compact parameter domains. They do not imply that a classical optimizer will necessarily find the global minimizer, since the variational landscape is generally nonconvex. Nevertheless, the continuity and boundedness of the objective function ensure a well-posed optimization problem with stable numerical optimization. In the numerical experiments, we therefore use multiple random initializations to assess the robustness of the obtained solutions.
\end{remark}

\begin{remark}[Expressivity and Circuit Depth]\label{exp}
The expressive power of the hardware-efficient ansatz is strongly influenced by the circuit depth $L$. Increasing $L$ enhances the expressive capacity of the ansatz and enlarges the variational search space, potentially improving the achievable terminal fidelity. However, it also increases the number of trainable parameters, making the optimization landscape more challenging. This expressivity--optimization trade-off is investigated numerically in Section~IV.
\end{remark}

\begin{remark}[Parameter Count and Circuit Complexity]
For the proposed $N$-qubit hardware-efficient ansatz with $L$ layers, each layer applies one $R_z$ and one $R_x$ rotation to every qubit, followed by a nearest-neighbor CNOT entangling block. Consequently, the total number of trainable parameters is
$p = 2NL$,
while the total number of entangling gates is
$N_{\mathrm{CNOT}} = L(N-1)$.
Thus, increasing either the system size $N$ or the circuit depth $L$ enhances the expressive capacity of the ansatz at the cost of a larger parameter space and increased circuit complexity.
\end{remark}

\begin{remark}[Barren Plateaus and Scalability] \label{barren}
Hardware-efficient ansätze are known to exhibit barren plateau phenomena as the system size and circuit depth increase, leading to vanishing gradients under random parameter initializations \cite{larocca2022diagnosing,mcclean2018barren}. Consequently, optimization becomes increasingly challenging for large-scale variational quantum circuits. Here, this limitation is mitigated by restricting attention to moderate system sizes and circuit depths and by employing multiple random initializations to assess robustness. The observed degradation in convergence and increased sensitivity to initialization for larger systems, reported in Section~IV, are consistent with the expected barren plateau behavior. Addressing scalability beyond this regime may require structured ansätze, problem-informed parameterizations, or layer-wise training strategies.
\end{remark}

\begin{figure*}[ht]
    \centering
    \includegraphics[width=0.99\textwidth]{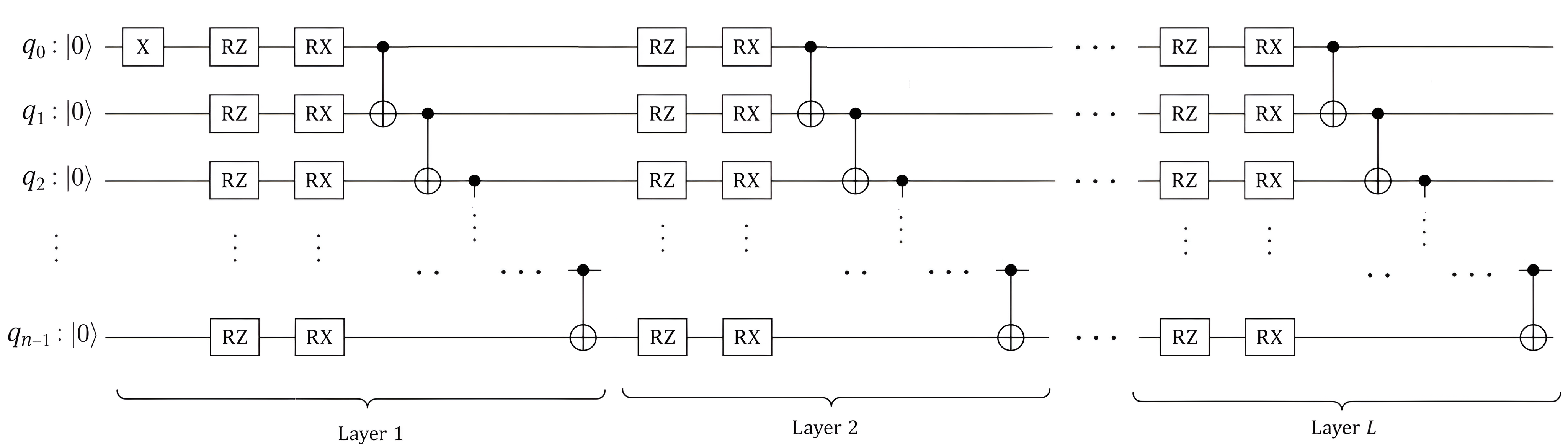}
    \caption{\small Hardware-efficient variational ansatz employed in the proposed framework for an $N$-qubit system. The initial state $|10\cdots0\rangle$ is prepared by applying an $X$ gate to qubit $q_0$. Each variational layer consists of parameterized $R_z$ and $R_x$ rotations on every qubit, followed by a nearest-neighbor CNOT entangling block. Repeating this structure for $L$ layers defines the parameterized unitary $U(\theta)$ used to represent the terminal evolution.}
    \label{fig:ansatz_circuit}
\end{figure*}

\section{Numerical Experiments}

In this section, we evaluate the proposed variational quantum control framework based on a hardware-efficient ansatz. The objective is to assess its ability to achieve high-fidelity state transfer and to investigate how the optimization behavior depends on the system size and circuit depth.

\subsection{Problem Setup}

We consider a finite-horizon quantum state-transfer problem on an $N$-qubit system. The objective is to steer the system from an initial state $|\psi_{\mathrm{in}}\rangle$ to a target state $|\psi_{\mathrm{tar}}\rangle$ using the variational circuit described in Section~III. Specifically, we consider a single-excitation transfer task of the form
$
|\psi_{\mathrm{in}}\rangle = |10\cdots 00\rangle$,
$|\psi_{\mathrm{tar}}\rangle = |00\cdots 01\rangle$.
This benchmark provides a simple yet nontrivial test of the ability of the variational ansatz to learn a state-transfer map across a multi-qubit register. The task consists of transferring a single excitation from one end of the qubit register to the other.
The time horizon is normalized to $T=1$. The variational unitary is realized using the hardware-efficient ansatz introduced in Section~III, with circuit depth denoted by $L$. Each layer consists of parameterized single-qubit rotations followed by a nearest-neighbor CNOT entangling block. The initial excitation is prepared by applying an $X$ gate to the first qubit, as illustrated in Fig.~\ref{fig:ansatz_circuit}. In these experiments, the drift and control Hamiltonians are not explicitly simulated. Instead, the optimized circuit parameters provide a surrogate parameterization of the terminal evolution while preserving the finite-horizon state-transfer objective.
The optimization problem is solved using the Sequential Least Squares Programming (SLSQP) algorithm \cite{bonet2023performance}. The circuit parameters are initialized independently by sampling each parameter uniformly from the interval $[0,4\pi]$, and these bounds are maintained throughout the optimization. The maximum number of iterations is set to $100$, and the stopping tolerance is chosen as $10^{-3}$. All simulations are performed using a noiseless state-vector simulator. To account for the variability induced by random initialization, the results reported below are averaged over multiple independent realizations. Since the optimization problem is generally nonconvex, the final solution may depend on the initial parameter vector.

\paragraph*{Convergence Behavior}
We first examine the convergence behavior of the hybrid optimization procedure for different system sizes. In this experiment, the normalized time horizon is fixed to $T=1$ and the circuit depth is set to $L=10$. We consider $N\in\{3,5,7\}$ qubits, and for each system size the optimization is repeated from five independent random initializations.
Figure~\ref{fig:loss_vs_function_eval} shows the mean loss during optimization, with the shaded region indicating one standard deviation across the five random initializations. For all system sizes, the loss decreases steadily during optimization, indicating that the proposed variational formulation successfully optimizes the terminal state-transfer objective. The final mean fidelities are approximately $0.9996$, $0.9960$, and $0.9554$ for $N=3$, $N=5$, and $N=7$, respectively.
As the number of qubits increases, the optimization becomes more challenging, as reflected by slower convergence and lower final fidelities. This behavior is consistent with the growth of the Hilbert-space dimension, which enlarges the variational search space and reduces the typical overlap between a randomly initialized circuit state and a fixed target state. These observations are also consistent with the barren plateau discussion in Remark~\ref{barren}, where larger systems are expected to exhibit increasingly challenging optimization landscapes.

\begin{figure}[t]
    \centering
    \includegraphics[width=\linewidth]{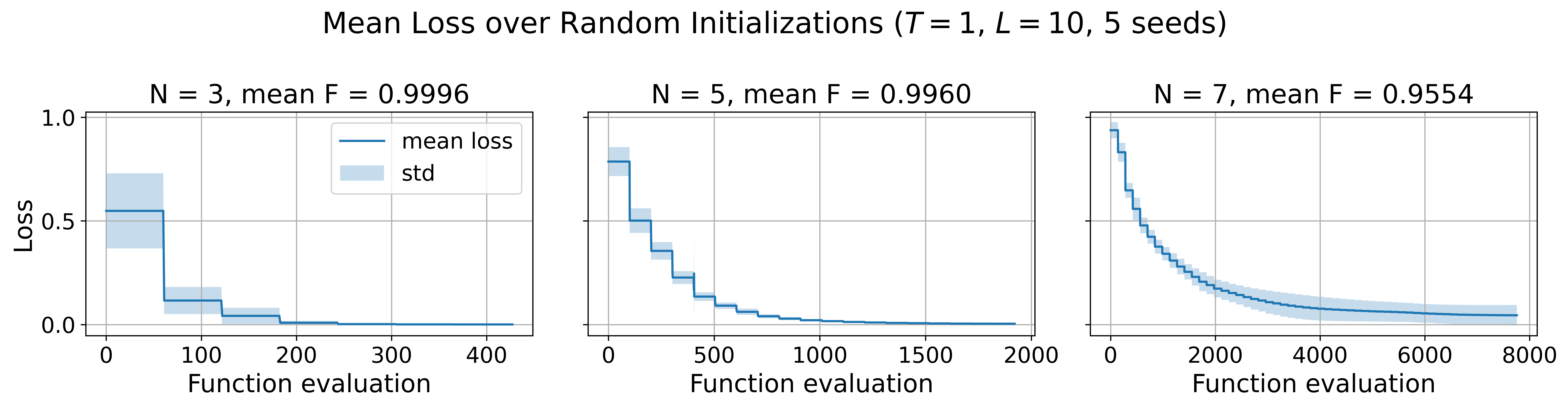}
    \caption{\small Mean loss during optimization for different system sizes $N\in\{3,5,7\}$ with normalized horizon $T=1$ and circuit depth $L=10$. The solid line represents the mean loss over five independent random initializations, and the shaded region indicates one standard deviation.}
    \label{fig:loss_vs_function_eval}
\end{figure}

\paragraph*{Effect of Circuit Depth}
We next investigate the effect of the circuit depth on the final state-transfer fidelity. In this experiment, we consider systems with $N\in\{3,5,7\}$ qubits, representing small- to moderate-scale quantum systems that remain tractable for state-vector simulation while capturing the increasing optimization difficulty associated with larger Hilbert spaces. We consider circuit depths $L\in\{4,6,8,10,12\}$, spanning shallow to relatively deep hardware-efficient circuits to investigate the trade-off between expressive capacity and optimization complexity. Since the circuit depth determines the number of trainable layers in the hardware-efficient ansatz, it directly influences the expressive capacity of the variational unitary.
For each pair $(N,L)$, the optimization is repeated from five independent random initializations, and the mean final fidelity together with one standard deviation is reported. Figure~\ref{fig:depth_fidelity} shows the final fidelity as a function of the circuit depth. For $N=3$, the ansatz achieves consistently high fidelities for all tested depths, with mean fidelities above $0.998$. The performance improves slightly with increasing depth, reaching approximately $0.9997$ at $L=12$. This indicates that relatively shallow circuits are already sufficiently expressive for the three-qubit transfer task.
For $N=5$, the fidelities remain above $0.99$ across all tested depths, although a mild nonmonotonic behavior is observed. The best performance is obtained for $L=12$, with a mean fidelity close to $0.998$. The relatively small standard deviations indicate stable optimization behavior across different random initializations.
For $N=7$, the dependence on circuit depth becomes significantly more pronounced. Shallow and intermediate depths ($L=4$ and $L=6$) lead to substantially lower fidelities and larger variability, reflecting the increased difficulty of the optimization problem for larger Hilbert spaces. In contrast, deeper circuits ($L\geq 8$) recover high-fidelity solutions more consistently, with the best performance achieved at $L=8$, where the mean fidelity exceeds $0.996$. Although high fidelities are still obtained for $L=10$ and $L=12$, the larger error bars indicate increased sensitivity to initialization and a more challenging optimization landscape.
The results demonstrate that increasing the circuit depth generally enhances the expressive capacity of the hardware-efficient ansatz, enabling high-fidelity state transfer for larger systems. However, deeper circuits also increase the number of trainable parameters and may lead to a more challenging optimization landscape. This behavior is consistent with the expressivity--optimization trade-off discussed in Remark~\ref{exp}.

\begin{figure}[t]
    \centering
    \includegraphics[width=\linewidth]{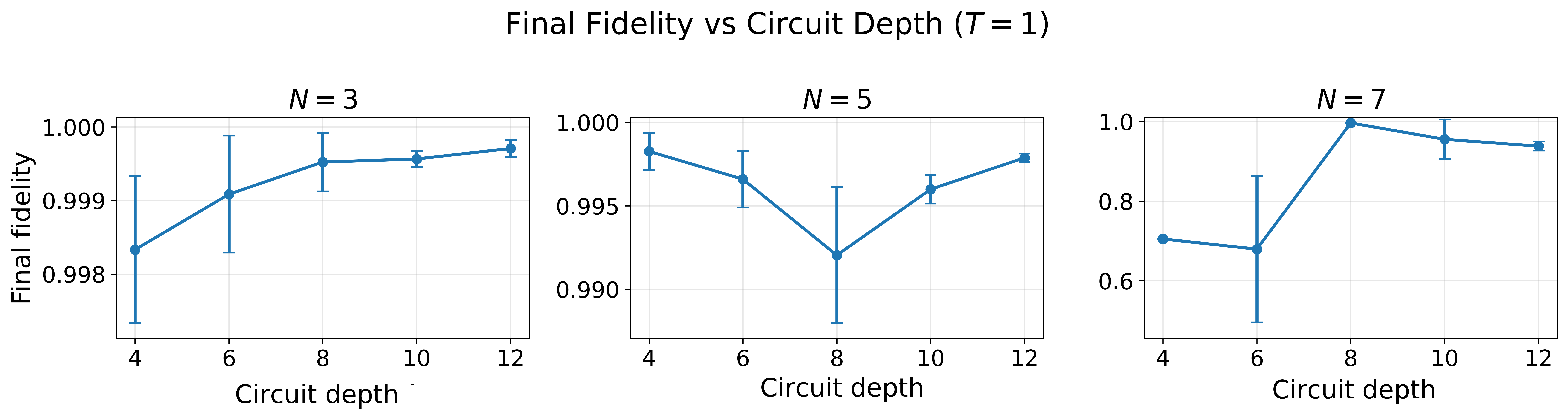}
    \caption{\small Final fidelity as a function of the circuit depth $L$ for system sizes $N=3$, $N=5$, and $N=7$ with normalized horizon $T=1$. The markers represent the mean final fidelity over five independent random initializations, while the error bars indicate one standard deviation.}
    \label{fig:depth_fidelity}
\end{figure}

\paragraph*{Sensitivity to Random Initialization}
We investigate the sensitivity of the hybrid optimization procedure to the initialization of the variational parameters. The circuit depth is fixed to $L=10$, and the optimization is repeated from $20$ independent random initializations for each system size $N\in\{3,5,7\}$.
Figure~\ref{fig:robustness_hist} shows the distribution of the final fidelities obtained across the different runs. For $N=3$, the optimization consistently converges to fidelities very close to unity, with only a small spread across random initializations, indicating a stable optimization landscape for small systems.
For $N=5$, the distribution remains strongly concentrated around high fidelities, although the variability increases slightly. Most runs converge to fidelities above $0.995$, demonstrating that the optimization procedure remains robust for moderate system sizes.
For $N=7$, the distribution becomes significantly broader and exhibits multiple clusters of solutions. While many runs still converge to fidelities close to unity, several terminate at substantially lower fidelities. 
This behavior is consistent with increasingly challenging optimization landscapes and possibly multiple local minima for larger systems.
The proposed variational formulation remains robust for small and moderate system sizes, whereas larger Hilbert spaces introduce increased sensitivity to initialization and greater optimization difficulty. These observations are consistent with the barren plateau discussion in Remark~\ref{barren} and further highlight the increasing optimization complexity as the system size grows.

\begin{figure}[t]
    \centering
    \includegraphics[width=\linewidth]{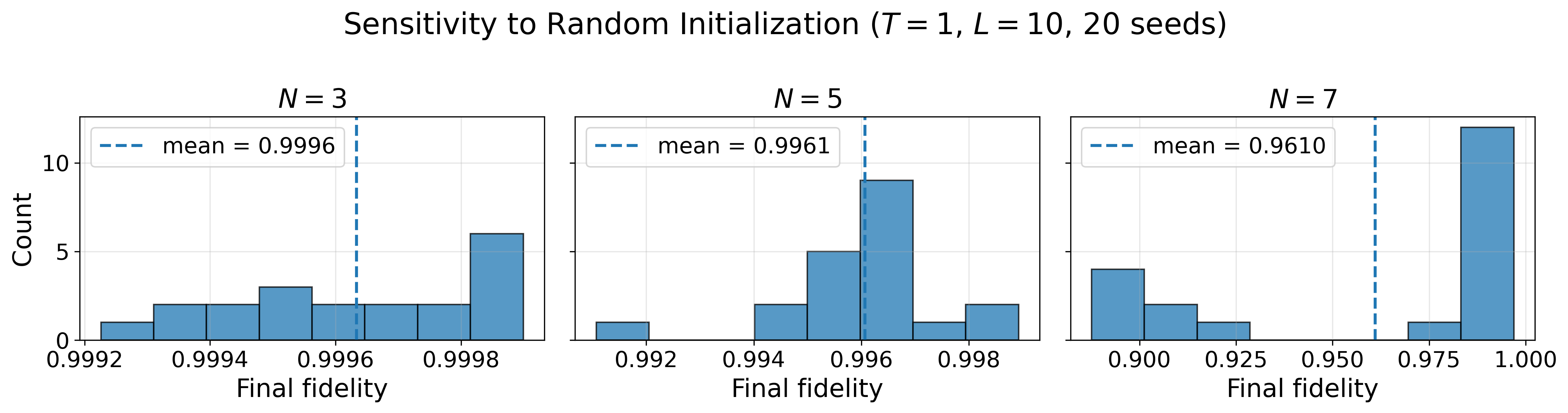} \caption{\small Distribution of the final fidelities obtained from $20$ independent random initializations for system sizes $N=3$, $N=5$, and $N=7$ with fixed circuit depth $L=10$ and normalized horizon $T=1$. The dashed vertical line indicates the mean fidelity for each system size.}
    \label{fig:robustness_hist}
\end{figure}

\paragraph*{Execution of Optimized Circuits on Quantum Backends}
Figure~\ref{fig:before_after_circuit} illustrates representative hardware-efficient circuits before and after variational optimization for the $N=7$, $L=8$ state-transfer task. The circuit topology remains unchanged throughout the optimization process; only the trainable rotation parameters are updated by the classical optimizer. The optimized parameters can subsequently be transferred directly to external quantum simulator backends available through Azure Quantum, demonstrating compatibility with standard gate-based quantum execution workflows.

\begin{figure*}[t]
    \centering
    \begin{subfigure}{0.9\textwidth}
        \centering
        \includegraphics[width=\textwidth]{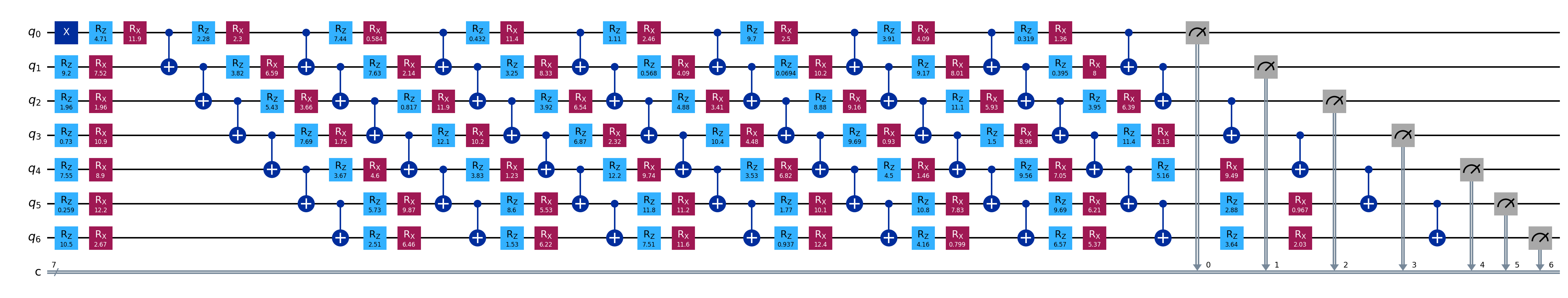}
        \caption{Circuit before optimization with randomly initialized parameters.}
    \end{subfigure}
    \vspace{0.8em}
    \begin{subfigure}{0.95\textwidth}
        \centering
        \includegraphics[width=\textwidth]{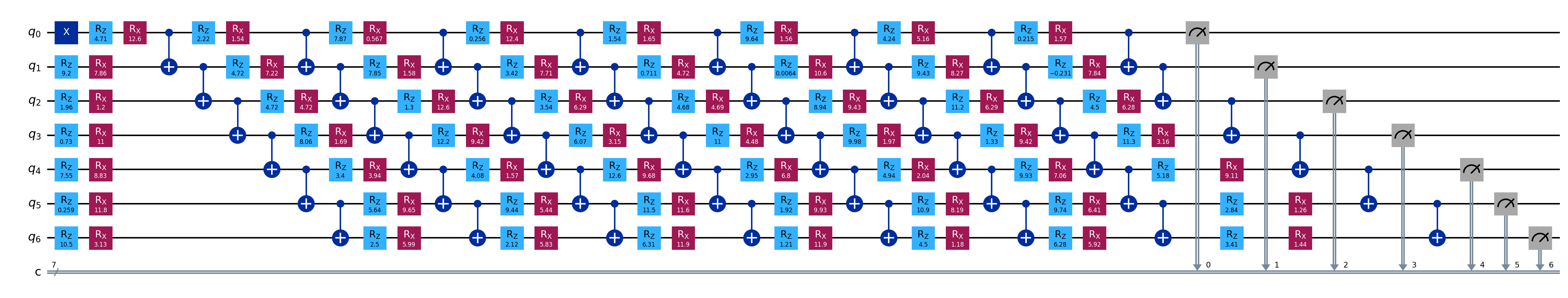}
        \caption{Circuit after variational optimization.}
    \end{subfigure}
\caption{\small Representative seven-qubit hardware-efficient variational circuits with circuit depth $L=8$. The initial state $|10\cdots0\rangle$ is prepared by applying an $X$ gate to qubit $q_0$. The circuit topology remains unchanged before and after optimization; only the trainable rotation angles are updated by the hybrid quantum-classical optimization procedure.}
    \label{fig:before_after_circuit}
\end{figure*}

Table~\ref{tab:backend_results} summarizes the measured target-state probabilities obtained using different Azure Quantum simulator backends with $1000$ shots. Before training, the output probability is distributed over many computational basis states and the target-state probability remains low. After optimization, the probability distribution becomes strongly concentrated on the target bitstring, indicating that the optimized variational parameters transfer consistently across different quantum simulation platforms. In particular, the optimized circuit achieves target-state probabilities above $0.99$ on the IonQ and Rigetti simulators, while the Quantinuum H2 emulator achieves approximately $0.94$. The lower target-state probability observed on the Quantinuum H2 emulator is expected because \texttt{quantinuum.sim.h2-1e} incorporates a realistic hardware and noise model of the H2-1 device rather than representing an ideal noiseless simulator \cite{quantinuum_h2_emulator}.
\begin{table}[h]
\centering
\caption{\small Target-state probabilities before and after optimization for the representative $N=7$, $L=8$ variational circuit executed on different Azure Quantum simulator backends using $1000$ shots.}
\label{tab:backend_results}
\begin{tabular}{lcc}
\hline
Backend & Before training & After training \\
\hline
Quantinuum H2 emulator & $0.033$ & $0.942$ \\
IonQ simulator & $0.038$ & $0.994$ \\
Rigetti QVM simulator & $0.040$ & $0.992$ \\
\hline
\end{tabular}
\end{table}

\section{Conclusion}

We presented a variational quantum framework for finite-horizon state-transfer problems based on a hardware-efficient ansatz. Instead of explicitly optimizing time-dependent control fields, the proposed approach optimizes trainable circuit parameters that provide a surrogate parameterization of the terminal evolution. Numerical results demonstrated high-fidelity state transfer for small and moderate multi-qubit systems, highlighting the role of circuit depth in enhancing expressivity and the increasing sensitivity to initialization for larger systems. The optimized circuits were successfully transferred to multiple Azure Quantum simulator backends, demonstrating compatibility with standard gate-based quantum execution workflows. The proposed framework provides a flexible approach to quantum control that is naturally compatible with near-term quantum computing architectures. The present work considers an open-loop finite-horizon control setting without intermediate measurements or feedback; extending the proposed framework to closed-loop quantum control remains an interesting direction for future work.

\bibliographystyle{IEEEtran}%{IEEEconf}
\bibliography{Main}

@book{bryson1975applied,
  title={Applied optimal control: optimization, estimation and control},
  author={Bryson, Arthur Earl},
  year={2018},
  publisher={Routledge}
}

@book{kirk2004optimal,
  title={Optimal control theory: an introduction},
  author={Kirk, Donald E},
  year={2004},
  publisher={Courier Corporation}
}

@article{glaser2015training,
  title={Training Schr{\"o}dinger’s cat: Quantum optimal control: Strategic report on current status, visions and goals for research in Europe},
  author={Glaser, Steffen J and Boscain, Ugo and Calarco, Tommaso and Koch, Christiane P and K{\"o}ckenberger, Walter and Kosloff, Ronnie and Kuprov, Ilya and Luy, Burkhard and Schirmer, Sophie and Schulte-Herbr{\"u}ggen, Thomas and others},
  journal={The European Physical Journal D},
  volume={69},
  number={12},
  pages={279},
  year={2015},
  publisher={Springer}
}

@article{brif2010control,
  title={Control of quantum phenomena: past, present and future},
  author={Brif, Constantin and Chakrabarti, Raj and Rabitz, Herschel},
  journal={New Journal of Physics},
  volume={12},
  number={7},
  pages={075008},
  year={2010},
  publisher={IOP Publishing}
}

@article{cerezo2021variational,
  title={Variational quantum algorithms},
  author={Cerezo, Marco and Arrasmith, Andrew and Babbush, Ryan and Benjamin, Simon C and Endo, Suguru and Fujii, Keisuke and McClean, Jarrod R and Mitarai, Kosuke and Yuan, Xiao and Cincio, Lukasz and others},
  journal={Nature Reviews Physics},
  volume={3},
  number={9},
  pages={625--644},
  year={2021},
  publisher={Nature Publishing Group UK London}
}

@article{bharti2022noisy,
  title={Noisy intermediate-scale quantum algorithms},
  author={Bharti, Kishor and Cervera-Lierta, Alba and Kyaw, Thi Ha and Haug, Tobias and Alperin-Lea, Sumner and Anand, Abhinav and Degroote, Matthias and Heimonen, Hermanni and Kottmann, Jakob S and Menke, Tim and others},
  journal={Reviews of Modern Physics},
  volume={94},
  number={1},
  pages={015004},
  year={2022},
  publisher={APS}
}

@article{kandala2017hardware,
  title={Hardware-efficient variational quantum eigensolver for small molecules and quantum magnets},
  author={Kandala, Abhinav and Mezzacapo, Antonio and Temme, Kristan and Takita, Maika and Brink, Markus and Chow, Jerry M and Gambetta, Jay M},
  journal={nature},
  volume={549},
  number={7671},
  pages={242--246},
  year={2017},
  publisher={Nature Publishing Group}
}

@article{schuld2020circuit,
  title={Circuit-centric quantum classifiers},
  author={Schuld, Maria and Bocharov, Alex and Svore, Krysta M and Wiebe, Nathan},
  journal={Physical Review A},
  volume={101},
  number={3},
  pages={032308},
  year={2020},
  publisher={APS}
}

@article{magann2019feedback,
  title={Feedback-based quantum optimization},
  author={Magann, Alicia B and Rudinger, Kenneth M and Grace, Matthew D and Sarovar, Mohan},
  journal={Physical Review Letters},
  volume={129},
  number={25},
  pages={250502},
  year={2022},
  publisher={APS}
}

@article{larocca2022diagnosing,
  title={Diagnosing barren plateaus with tools from quantum optimal control},
  author={Larocca, Martin and Czarnik, Piotr and Sharma, Kunal and Muraleedharan, Gopikrishnan and Coles, Patrick J and Cerezo, Marco},
  journal={Quantum},
  volume={6},
  pages={824},
  year={2022},
  publisher={Verein zur F{\"o}rderung des Open Access Publizierens in den Quantenwissenschaften}
}

@inproceedings{dehaghani2023quantumICSC,
  title={Quantum state transfer optimization: Balancing fidelity and energy consumption using Pontryagin maximum principle},
  author={Dehaghani, Nahid Binandeh and Aguiar, A Pedro},
  booktitle={2023 IEEE 11th International Conference on Systems and Control (ICSC)},
  pages={94--99},
  year={2023},
  organization={IEEE}
}

@article{dehaghani2023quantum,
  title={Quantum pontryagin neural networks in gamkrelidze form subjected to the purity of quantum channels},
  author={Dehaghani, Nahid Binandeh and Aguiar, A Pedro and Wisniewski, Rafal},
  journal={IEEE Control Systems Letters},
  volume={7},
  pages={2227--2232},
  year={2023},
  publisher={IEEE}
}

@article{dehaghani2025trotterized,
  title={Trotterized Variational Quantum Control for Spin-Chain State Transfer},
  author={Dehaghani, Nahid Binandeh and Wisniewski, Rafal and Aguiar, A Pedro},
  journal={arXiv preprint arXiv:2511.09684},
  year={2025}
}

@article{mcclean2018barren,
  title={Barren plateaus in quantum neural network training landscapes},
  author={McClean, Jarrod R and Boixo, Sergio and Smelyanskiy, Vadim N and Babbush, Ryan and Neven, Hartmut},
  journal={Nature communications},
  volume={9},
  number={1},
  pages={4812},
  year={2018},
  publisher={Nature Publishing Group UK London}
}

@misc{quantinuum_h2_emulator,
  author       = {{Quantinuum}},
  title        = {{H2 Emulators User Guide}},
  year         = {2025},
  howpublished = {\href{https://docs.quantinuum.com/systems/user_guide/emulator_user_guide/emulators/h2_emulators.html}{\url{https://docs.quantinuum.com/systems/user_guide/emulator_user_guide/emulators/h2_emulators.html}}},
  note         = {Accessed: 2025-08-15}
}

@article{bonet2023performance,
  title={Performance comparison of optimization methods on variational quantum algorithms},
  author={Bonet-Monroig, Xavier and Wang, Hao and Vermetten, Diederick and Senjean, Bruno and Moussa, Charles and B{\"a}ck, Thomas and Dunjko, Vedran and O'Brien, Thomas E},
  journal={Physical Review A},
  volume={107},
  number={3},
  pages={032407},
  year={2023},
  publisher={APS}
}

@article{dehaghaniaccess,
  title={State estimation and control for stochastic quantum dynamics with homodyne measurement: Stabilizing qubits under uncertainty},
  author={Dehaghani, Nahid Binandeh and Aguiar, A Pedro and Wisniewski, Rafal},
  journal={IEEE Access},
  volume={12},
  pages={124729--124739},
  year={2024},
  publisher={IEEE}
}
\end{document}